\theoremstyle{plain}
\newtheorem{thm}{Theorem}
\newtheorem{lem} {Lemma}
\newcommand{\be}{\begin{eqnarray}}
\newcommand{\ee}{\end{eqnarray}}
\newcommand{\bc}{\begin{center}}
\newcommand{\ec}{\end{center}}
\newcommand{\nn}{\nonumber \\}
\newcommand{\lb}{\label}
\newcommand{\p}[1]{(\ref{#1})}
\begin{document}

\begin{titlepage}

\vspace*{0.2cm}

\renewcommand{\thefootnote}{\star}
\begin{center}

{\LARGE\bf  Witten index for weak supersymmetric systems: invariance under deformations}

\vspace{2cm}

{\Large Andrei Smilga} \\

\vspace{0.5cm}

{\it SUBATECH, Universit\'e de
Nantes,  4 rue Alfred Kastler, BP 20722, Nantes  44307, France. }

\end{center}
\vspace{0.2cm} \vskip 0.6truecm \nopagebreak  

   \begin{abstract}
\noindent  

When a $4D$ supersymmetric theory is placed on $S^3 \times \mathbb{R}$, the supersymmetric algebra is necessarily modified to $su(2|1)$ and we are dealing with a {\it weak} supersymmetric system. For such systems, the excited states of the Hamiltonian are not all paired. As a result, the Witten index Tr$\{(-1)^F e^{-\beta H}\}$ is no longer an integer number, but a $\beta$-dependent function. 

 However, this function stays invariant under deformations of the theory that keep the supersymmetry algebra intact. Based on the Hilbert space analysis, we give a simple general proof of this fact. We then show how this invariance works for two simplest weak supersymmetric quantum mechanical systems involving a real or a complex bosonic degree of freedom.

   \end{abstract}

\end{titlepage}

\setcounter{footnote}{0}

\setcounter{equation}0

\section{Introduction} 
As was noticed 40 years ago by Witten \cite{Wit8182}, for any ordinary supersymmetric system where the Hamiltonian represents the anticommutator $\{Q, \bar Q\}$ of complex supercharges, the supertrace
 \be
\lb{Witind}
I_W \ =\ \left \langle \left \langle e^{-\beta H} \right \rangle  \right \rangle
\ \stackrel{\rm def}= \  {\rm Tr} \left\{ (-1)^F  e^{-\beta H}  \right\}
 \ee
 is an integer number non depending on $\beta$. 

Indeed, all excited levels of the Hamiltonian are paired and only zero-energy states annihilated by the action of both $Q$ and $\bar Q$ may contribute. As a result, $I_W = 
n^{E=0}_B - n^{E=0}_F$ is an integer number not depending on $\beta$.

Take a 4-dimensional field theory. To regularize it in the infrared, one can put it in a finite toroidal spatial box. We obtain a supersymmetric quantum system with an infinite but discrete set of dynamical variables. It is important that the original supersymmetry algebra is kept intact under such regularisation. The Witten index \p{Witind} is still a $\beta$-independent integer.

A different situation arises when we put the theory not on $T^3$, but on $S^3$. In this case, the algebra is necessarily modified. Indeed, the momenta $P_a$, which commute on $\mathbb{R}^3$ and $T^3$, do not commute anymore: they realize now the isometries of $S^3$ and satisfy the $su(2)$ algebra. For an explicit construction, 
consider the embedding $x^2 + y^2 + z^2 + t^2 = \rho^2$ of $S^3$ of radius $\rho$ into $\mathbb{R}^4$. Consider then the operators $J^a = - \frac i2 \eta^a_{\mu\nu} x_\mu \partial_\nu$, where $\eta^a_{\mu\nu}$ are the  `t Hooft symbols \cite{Hooft},
\be
 J^1 &=& \frac i2 ( t\partial_x - x\partial_t + z\partial_y - y \partial_z), \nn
 J^2 &=& \frac i2 ( t\partial_y - y\partial_t + x\partial_z - z \partial_x), \nn
 J^3 &=& \frac i2 ( t\partial_z - z\partial_t + y\partial_x - x \partial_y).
\ee
Their commutators are $[J^a, J^b] = i \varepsilon^{abc} J^c$. At the vicinity of the north pole of $S^3$, $x_\mu = (\vec{0}, \rho)$, $J^a$ generate tangent space translations: $J^a \approx \frac {i\rho}2 \partial_a = - \frac {\rho}2 P_a$.

The supersymmetry may be broken completely (if, for example, one na\"ively replaces the flat Minkowski metric by a curved $S^3 \times \mathbb{R}$ metric in the Lagrangian), but, proceeding in a clever way and adding certain extra terms in the Lagrangian, one may keep a {\it part} of the original supersymmetry \cite{Sen}. The new algebra has the form

\be
\lb{alg-Sen}
[P_a, P_b]  &=& \frac {-2i}\rho \varepsilon_{abc} P_c, \nn
\ [Q_\alpha, P_a] &=& -\frac {1} {\rho} (\sigma_a Q)_\alpha, \qquad
 \ [ \bar Q^\alpha, P_a] =  \frac {1} {\rho} ( \bar Q \sigma_a)^\alpha, \nn
\{Q_\alpha, \bar Q^\beta\} &=& 2\left(H -\frac R \rho \right)\delta_\alpha{}^\beta \ + \ 2(\sigma_a)_\alpha^{\ \beta} P_a ,  \nn
 \ [ Q_\alpha, R]   &=&  -Q_\alpha, \qquad [\bar Q^\alpha, R] \ =\ \bar  Q^\alpha, \nn
\,[H, P_a] &=& [H, R] \ =\ [R,P_a] \ =\  0, \nn
\{Q_\alpha, Q_\beta\} &=& \{\bar Q^\alpha, \bar Q^\beta\} \ = \ [Q_\alpha, H] \ = \  [\bar Q^\alpha, H] \ = \ 0,
 \ee
where $\alpha = 1,2$ (bearing in mind further quantum-mechanical applications, we do not distiguish the dotted and undotted indices) and  $\sigma_a$ are the Pauli matrices.  
This algebra involves, besides $H, P_a, Q_\alpha, \bar Q^\alpha = (Q_\alpha)^\dagger$, also an extra $U(1)$ generator $R$. The presence of the latter is necessary: if one keeps the requirement $[Q_\alpha, H] = 0$, the Jacobi identities would not hold  without $R$ and the algebra would not be consistent. The mathematical notation for the algebra \p{alg-Sen} is\footnote{To be more precise, the algebra $su(2|1)$ includes only four bosonic generators: $H - R/\rho$ and $P_a$. We are dealing here with a {\it central extension} of this algebra.} $su(2|1)$. In the limit $\rho \to \infty$, the standard 4-dimensional ${\cal N} =1$ supersymmetry algebra  is reproduced. 

Not only the $4D$ supersymmetric field theories placed on $S^3 \times \mathbb{R}$ enjoy the algebra \p{alg-Sen}. This algebra also shows up in SQM systems not related to any field theory \cite{weak}. Following \cite{weak}, we will call {\it weak} this variety of supersymmetry.

The presence of the central charges in the anticommutator $\{Q_\alpha, \bar Q^\beta\}$ invalidates the usual claim that,  all the positive-energy states of the Hamiltonian are paired and there is an equal number of bosonic and fermionic degenerate states. As a result, the excited states may contribute to the supertrace \p{Witind} and the index is not an integer number anymore, but represents a nontrivial function of temperature. If the theory involves  charges ${\cal M}_j$ commuting with the Hamiltonian and the supercharges, one may introduce the associated chemical potentials $\mu_j$ and consider the supertraces $\langle \langle e^{-\beta H} e^{\mu_j {\cal M}_j} \rangle \rangle$. These supertraces may represent complicated functions of $\beta$ and $\mu_j$ \cite{Spir}, but the point is that they represent topological invariants in the same sense as the ordinary Witten index is --- they stay invariant under the deformations of the theory that keep the algebra intact. 

In the literature, this functional index is usually referred as {\it superconformal index}. That is how it was christened by its discoverers \cite{Romel1} and, indeed, this notion is very useful for studying the dynamics of superconformal theories
\cite{Rastelli}. But this index has more general scope. First of all, a $4D$ theory to be placed on  $S^3 \times \mathbb{R}$ need not necessarily be conformal. Second, as we mentioned, there exist  supersymmetric system enjoying the algebra 
isomorphic to \p{alg-Sen} and not related to any field theory. That is why we will not use the word superconformal, but will either call this index  "Witten index for weak SUSY systems", as in the title, or {\it R\"omelsberger's index}.

The main original point of this paper is a simple proof of the invariance of R\"omelsberger's index under deformations. Then we will illustrate this general theorem by two examples: {\it (i)} the simplest weak supersymmetry model of Ref. \cite{weak} and {\it (ii)} R\"omelsberger's model --- a weak supersymmetric quantum mechanical model involving a complex bosonic dynamical variable and arising when the massless Wess-Zumino $4D$ model is put on  $S^3 \times \mathbb{R}$ and the higher spherical harmonics of the fields are suppressed.

\section{Invariance of the index}
\setcounter{equation}0

\begin{thm} Let $Q_\alpha, \bar Q^\alpha$ and $H = P_0$ but not $P_a$ and $R$ be functions of parameter $\gamma$ such that the algebra \p{alg-Sen} stays intact. Then
 \be
\lb{theor}
\frac d{d\gamma} 
 \left \langle \left \langle e^{-\beta H} \right \rangle  \right \rangle \ =\ 0.
\ee
\end{thm}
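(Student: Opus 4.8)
The plan is to prove \p{theor} by differentiating the supertrace directly, the only structural input being that a suitable contraction of the central relation in \p{alg-Sen} expresses $H$ through the supercharges with the non-commuting momenta removed. Contracting $\alpha=\beta$ in $\{Q_\alpha,\bar Q^\beta\}=2(H-R/\rho)\delta_\alpha{}^\beta+2(\sigma_a)_\alpha{}^\beta P_a$ and using ${\rm tr}\,\sigma_a=0$ gives
\be
\sum_{\alpha=1,2}\{Q_\alpha,\bar Q^\alpha\} &=& 4\left(H-\frac R \rho\right).
\ee
By hypothesis $Q_\alpha,\bar Q^\alpha$ and $H$ depend on the deformation parameter $\gamma$ while $R$ and $\rho$ do not, so differentiating this identity yields
\be
\frac{dH}{d\gamma} &=& \frac14\sum_{\alpha}\Big(\{\dot Q_\alpha,\bar Q^\alpha\}+\{Q_\alpha,\dot{\bar Q}^\alpha\}\Big),
\ee
where a dot denotes $d/d\gamma$; that is, $\dot H$ is a sum of anticommutators of the (current) supercharges with the fermionic operators $\dot Q_\alpha,\dot{\bar Q}^\alpha$.

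Next I would reduce the $\gamma$-derivative of the supertrace to a single trace. Since $(-1)^F$ commutes with $e^{-\beta H}$, the Duhamel formula together with cyclicity of the trace gives the familiar
\be
\frac{d}{d\gamma}\left\langle\left\langle e^{-\beta H}\right\rangle\right\rangle &=& -\beta\,{\rm Tr}\left\{(-1)^F\,\dot H\,e^{-\beta H}\right\}.
\ee
It then suffices to invoke the elementary cancellation lemma: if $B$ is a fermionic operator commuting with $H$ and $A$ is any fermionic operator, then ${\rm Tr}\{(-1)^F\{A,B\}\,e^{-\beta H}\}=0$. This is a one-line cyclic rearrangement — move $B$ around the trace, slide it past $e^{-\beta H}$ using $[B,H]=[(-1)^F,H]=0$, and use that $A$ anticommutes with $(-1)^F$ — so that ${\rm Tr}\{(-1)^FBAe^{-\beta H}\}=-{\rm Tr}\{(-1)^FABe^{-\beta H}\}$. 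Applying the lemma with $B=\bar Q^\alpha$ (which commutes with $H$ by \p{alg-Sen}) and $A=\dot Q_\alpha$, and with $B=Q_\alpha$, $A=\dot{\bar Q}^\alpha$, every term of ${\rm Tr}\{(-1)^F\dot H\,e^{-\beta H}\}$ vanishes, which is \p{theor}.

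Two points make the argument tick, and the second is where I expect the only genuine care to be needed. The algebraic point is that summing over $\alpha$ projects out the $su(2)$ momenta $P_a$, leaving $H-R/\rho$: this is precisely the residue of the usual boson--fermion pairing that survives the deformation \p{alg-Sen} of the flat algebra. The point requiring scrutiny is twofold — one must deform only $Q_\alpha,\bar Q^\alpha,H$ and keep $R,\rho$ frozen (otherwise $\dot H$ picks up a $\dot R/\rho$ piece that is not a supercharge anticommutator and the cancellation fails), and one must assume enough convergence (trace-class $e^{-\beta H}$, absolutely convergent traces of $\dot Q_\alpha\bar Q^\alpha e^{-\beta H}$, etc.) to license the cyclic manipulations; both hold automatically in the quantum-mechanical examples of the following sections and are the standing assumptions in the field-theory setting. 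The same statement can be read more physically: taking the single nilpotent supercharge ${\cal Q}=Q_1$ one has $\{{\cal Q},{\cal Q}^\dagger\}=2(H-R/\rho+P_3)$, so states not annihilated by both ${\cal Q}$ and ${\cal Q}^\dagger$ cancel in pairs of equal $H$ and opposite $(-1)^F$, and the index reduces to a supertrace over the ${\cal Q}$-cohomology on which $H$ acts as the frozen operator $R/\rho-P_3$ — which is the conceptual reason the index is rigid yet $\beta$-dependent.
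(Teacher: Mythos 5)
Your argument is correct, and its skeleton coincides with the paper's: differentiate the supertrace to get $-\beta\,\langle\langle \dot H e^{-\beta H}\rangle\rangle$, use the $\gamma$-independence of the central terms in $\{Q_\alpha,\bar Q^\beta\}$ to write $\dot H$ as a sum of anticommutators of the supercharges with $\dot Q_\alpha,\dot{\bar Q}^\alpha$, and then kill each term by a cancellation lemma. Where you genuinely diverge is in the proof of that lemma. The paper proves $\langle\langle\{Q_1,V\}e^{-\beta H}\rangle\rangle=0$ by a Hilbert-space analysis: it passes to the ${\cal N}=2$ subalgebra ${\cal A}_1$ generated by $Q_1,\bar Q^1$, splits the spectrum of $H_1=\tfrac12\{Q_1,\bar Q^1\}$ into singlets and doublets $(B,F)$ with $Q_1|B\rangle=|F\rangle$, and exhibits the pairwise cancellation of matrix elements; you instead prove the same statement purely algebraically, by cyclicity of the trace combined with $[B,H]=0$ and the fact that a fermionic $B$ anticommutes with $(-1)^F$ (note that in your one-line rearrangement only $B$ needs to be odd; the oddness of $A$ is not actually used). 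Your route is shorter and makes transparent that the only structural inputs are $[Q_\alpha,H]=[\bar Q^\alpha,H]=0$ and the grading, at the cost of leaning on cyclic manipulations with unbounded operators, which you rightly flag as requiring trace-class/convergence assumptions; the paper's spectral version is the same cancellation written out in an eigenbasis, which ties the mechanism to the multiplet structure (singlets versus doublets of ${\cal A}_1$) that the rest of the paper exploits in the explicit examples. Your contraction over $\alpha$ using ${\rm tr}\,\sigma_a=0$ to isolate $H-R/\rho$, and your closing remark that the index is computed by the $Q_1$-cohomology on which $H$ acts as $R/\rho-P_3$, are both consistent with the paper (the latter is essentially the paper's reinterpretation of the index with fugacities in Sect.~3).
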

\begin{proof}

By expanding $e^{-\beta H}$ into the series and using the cyclic property of the supertrace, we deduce
\be 
\lb{diff-exp}
\frac d{d\gamma} 
 \left \langle \left \langle e^{-\beta H} \right \rangle  \right \rangle \ =\ - \beta 
\left \langle \left \langle \frac {dH}{d\gamma} e^{-\beta H} \right \rangle  \right \rangle.
 \ee
The third line in \p{alg-Sen} reads
$$ \{Q_\alpha, \bar Q^\beta\} \ =\  2 H \delta_\alpha{}^\beta \quad+ \quad {\rm central \ charges}. $$
Capitalizing on the assumed $\gamma$-independence of the central charges, we deduce
\be
\lb{dH-dQ}
 \frac {dH}{d\gamma} \ =\  \frac 14 \left\{Q_\alpha, \frac {d \bar Q^\alpha}{d\gamma} \right \} + 
\frac 14 \left\{\frac {dQ_\alpha}{d\gamma}, \bar Q^\alpha \right\}.
\ee
\begin{lem}
\be
\lb{lem}
\left \langle \left \langle \{Q_\alpha, V\} e^{-\beta H} \right \rangle  \right \rangle \ =\ 0
 \ee
for any (not too wild) $V$.
\end{lem}
\begin{proof}
Take for definiteness $\alpha =1$. By definition,
\be  
\left \langle \left \langle O \right \rangle  \right \rangle \ =\ \sum_B \langle B|O|B \rangle - \sum_F \langle F|O|F\rangle,
 \ee
where $|B\rangle$ and $|F\rangle$ are the bosonic and fermionic states. 

The weak SUSY algebra \p{alg-Sen} includes the ordinary ${\cal N} = 2$ SQM subalgebra ${\cal A}_1$ with the generators $Q_1, \bar Q_1$ and 
\be
H_1 = \frac 12  \{Q_1, \bar Q^1\} \ =\ H +P_3 - \frac R \rho.
 \ee 
We choose the eigenstates of $H_1$ (which are also the eigenstates of $H$ due to  $[H_1, H] = 0$) as the basis in Hilbert space. From the viewpoint of ${\cal A}_1$, the spectrum includes:

{\it i)} The states annihilated by the action of both $Q_1$ and $\bar Q^1$. If the symmetry ${\cal A}_1$  is not broken spontaneously, these are the ground states of $H_1$, 

{\it ii)} The doublets $(B,F)$ satisfying 
 \be
\lb{matr-el}
 Q_1 |B\rangle = |F\rangle, \quad Q_1|F\rangle = 0, \quad \langle B|Q_1 = 0, \quad \langle F| Q_1 = \langle B| . 
 \ee

Using $[Q_1, H] = 0$, we may rewrite \p{lem} as 
\be
\left \langle \left \langle \{Q_1, V\} e^{-\beta H} \right \rangle  \right \rangle \ =\  \left \langle \left \langle Q_1 V e^{-\beta H} \right \rangle  \right \rangle + \left \langle \left \langle V  e^{-\beta H} Q_1\right \rangle  \right \rangle \nn
= \ \sum_B \langle B | Q_1 V e^{-\beta H} + V  e^{-\beta H} Q_1 |B \rangle - 
\sum_F \langle F | Q_1 V e^{-\beta H} + V  e^{-\beta H} Q_1 |F \rangle.
 \ee
It is immediately seen that the singlet states to not contribute.
Next, using \p{matr-el}, we obtain
 \be
\left \langle \left \langle \{Q_1, V\} e^{-\beta H} \right \rangle  \right \rangle \ =\  \sum_{\rm doublets}
\langle B| V e^{-\beta H} |F \rangle - \sum_{\rm doublets}
\langle B| V e^{-\beta H} |F \rangle \ =\ 0.
\ee
By the same token we can prove 
\be
\lb{lem-bar}
\left \langle \left \langle \{\bar Q^\alpha, V\} e^{-\beta H} \right \rangle  \right \rangle \ =\ 0
 \ee
for any $V$.
\end{proof}
By combining \p{diff-exp}, \p{dH-dQ}, \p{lem}, \p{lem-bar}, we arrive at \p{theor}.
\end{proof}

{\bf Remarks}. 

\begin{enumerate}

\item The whole reasoning above also works for a {\it generalized} index 
\be 
\lb{ind-muM}
\tilde I \ =\  \left \langle \left \langle e^{-\beta H} e^{\mu M}\right \rangle  \right \rangle\,
 \ee
where $M$ is an operator that commutes with the Hamiltonian and at least one pair of the supercharges. One can take for $M$ the operator $P_3 - R/\rho$, which is present in the algebra \p{alg-Sen} and  commutes with  $Q_1$ and $\bar Q^1$ (the corresponding index is reduced to \p{Witind} depending on a certain combination of $\beta$ and $\mu$), but a rich enough dynamical system may involve many such extra integrals of motion $M_j$.

Then the  index 
\be 
\lb{ind-muMj}
I(\beta, \mu_j) \ =\  \left \langle \left \langle e^{-\beta  H} e^{\mu_j M_j}\right \rangle  \right \rangle\,
 \ee
 is  invariant under the deformations described above.

\item The theorem just proven represents a particular case of the so-called {\it equivariant index theorem} known to mathematicians (see Ref. \cite{Pestun} for a review addressed to physicists). The notion of the equivariant index was first introduced back in 1950 by Cartan \cite{Cartan}. 

\end{enumerate}

\section{Weak supersymmetric harmonic oscillator and its deformation}
\setcounter{equation}0
 Being confronted with a complicated physical problem, Enrico Fermi used to ask his collaborators and himself --- what is the hydrogen atom for this problem?
In our case, we have it. The simplest weak supersymmetric system \cite{weak} is even much simpler than the hydrogen atom: it includes only one real degree of freedom and the oscillator-like Hamiltonian, \footnote{Classically, $\bar \psi^\alpha = (\psi_\alpha)^\dagger$; the corresponding quantum operator is $\bar \psi^\alpha = \partial/\partial \psi_\alpha$; the indices are raised and lowered by $\varepsilon^{\alpha\beta} = - \varepsilon_{\alpha\beta}$ with the convention $\varepsilon^{12} = 1$.} 
 \be
\lb{Ham-real}
H = \frac {p^2 + x^2}2 + \frac 12 (\bar \psi^\alpha  \bar\psi_\alpha + \psi_\alpha \psi^\alpha).
 \ee
This Hamiltonian commutes with the supercharges\footnote{We changed the notation $S_\alpha \to Q_\alpha \sqrt{2}$, compared to Ref. \cite{weak}.}
\be
\lb{Q-real}
Q_\alpha = \ (p-ix)\,\frac{\psi_\alpha - \bar \psi_\alpha}{\sqrt{2}}, \qquad \bar Q^\alpha = (p+ix)\,\frac{\psi^\alpha + \bar \psi^\alpha}{\sqrt{2}}.
\ee 
The full algebra includes also the central charges $\in su(2) \oplus u(1)$ (note that $Z_{\alpha\beta} = Z_{\beta\alpha}$ and $Z_\alpha{}^\alpha = 0$),
\be
\lb{ZY}
Z_\alpha{}^\beta \ =\ \psi_\alpha \bar \psi^\beta + \psi^\beta \bar \psi_\alpha, \qquad Y = \frac 12 (\bar \psi^\alpha  \bar \psi_\alpha + \psi_\alpha \psi^\alpha)
 \ee
and reads
 \be
\lb{alg-real}
\{ Q_\alpha, Q_\beta\} &=& \{ \bar Q^\alpha, \bar Q^\beta\} \ =\ 0, \nn
\{ Q_\alpha, \bar Q^\beta\} &=& (2H - Y)\delta_\alpha{}^\beta +  Z_\alpha{}^\beta, \nn
\, [Q_\alpha, Z_\beta{}^\gamma] &=& \varepsilon_{\alpha\beta}\, Q^\gamma  - \delta_\alpha{}^\gamma \, Q_\beta, 
\qquad [\bar Q^\alpha, Z_\beta{}^\gamma] \ = \ \delta_\beta{}^\alpha \,\bar Q^\gamma -  \varepsilon^{\alpha\gamma} \,\bar Q_\beta , \nn
\,[Q_\alpha, Y] &=& - Q_\alpha, \qquad [\bar Q^\alpha, Y] \ =\   \bar Q^\alpha, \nn
 \,[Z_\alpha{}^\beta, Z_\gamma{}^\delta] &=& \delta_\gamma{}^\beta \,Z_\alpha{}^\delta -
\delta_\alpha{}^\delta \,Z_\gamma{}^\beta+
 \varepsilon_{\alpha\gamma} \,Z^{\beta\delta} - \varepsilon^{\beta\delta} \, Z_{\alpha\gamma}, \nn
\,[H, Z_\alpha{}^\beta] &=& [H,Y] \ = \ [Y, Z_\alpha{}^\beta] \ =\ 0.
 \ee
It is isomorphic to \p{alg-Sen}, as one can be explicitly convinced by replacing in \p{alg-real}
\be
H \ \to \ \frac {H\rho}{\sqrt{2}}, \quad Q_\alpha \ \to \ Q_\alpha \sqrt{\rho}, \quad Y \ \to \ R, \quad  
Z_\alpha{}^\beta \ \to \ \rho P_a (\sigma_a)_\alpha{}^\beta
  \ee

 and using the identities
\be
 (\sigma_a)_\alpha{}^\gamma (\sigma_a)_\beta{}^\delta &=& \delta_\alpha^{\ \delta}  \delta_\beta^{\ \gamma} 
+ \varepsilon_{\alpha\beta} \varepsilon^{\gamma\delta}, \nn
\varepsilon_{abc} (\sigma_a)_\alpha{}^\beta (\sigma_b)_\gamma{}^\delta &=& 
\frac i2 \left[\delta_\gamma^{\ \beta} (\sigma_c)_\alpha{}^\delta -  \delta_\alpha^{\ \delta} (\sigma_c)_\gamma{}^\beta
+ \varepsilon_{\alpha\gamma} (\sigma_c)^{\beta\delta} - \varepsilon^{\beta\delta} (\sigma_c)_{\alpha\gamma}\right].
  \ee

 \begin{figure} [ht!]
      \bc
    \includegraphics[width=.6\textwidth]{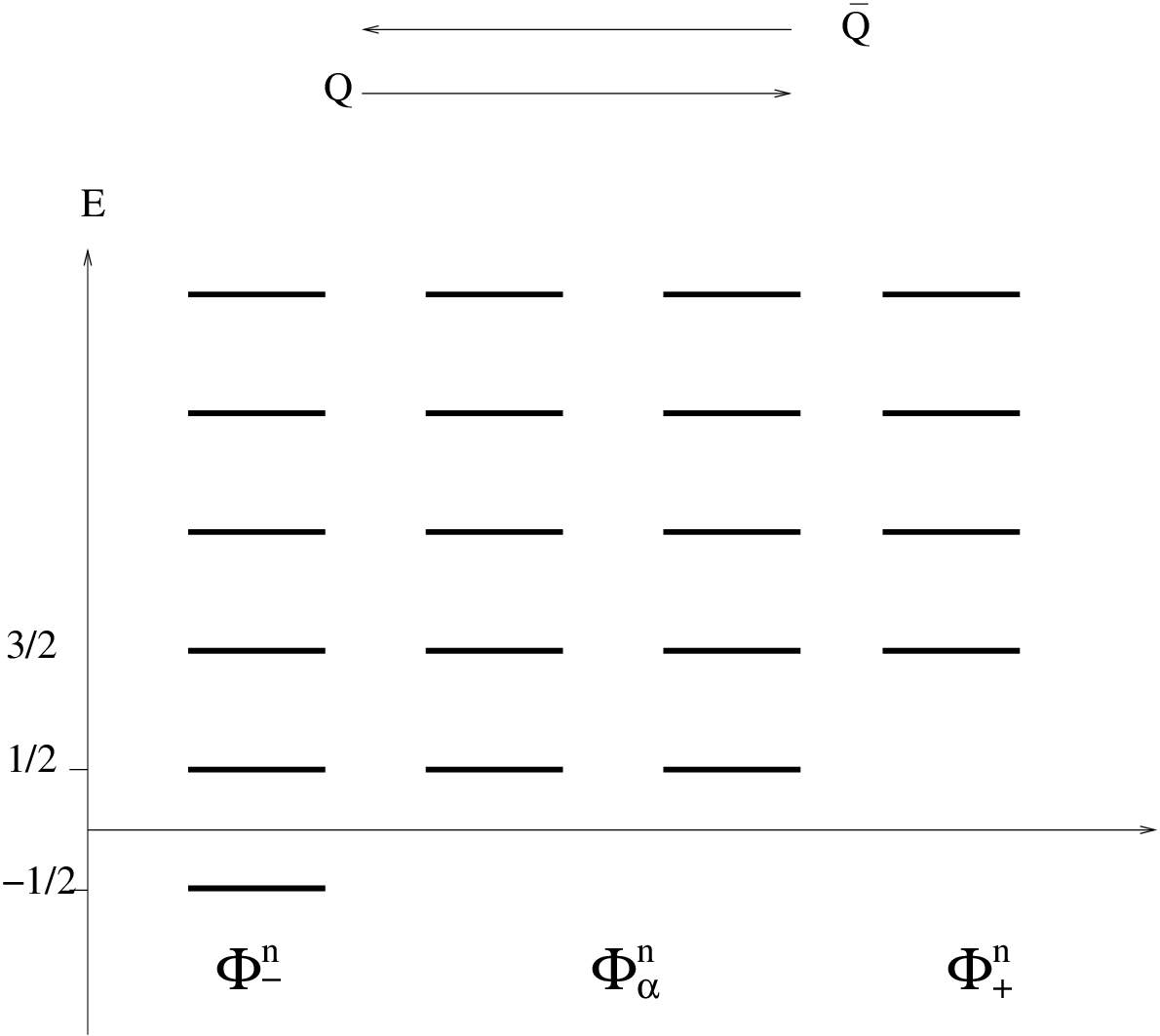}                  
     \ec
    \caption{The  \p{Ham-real}}        
 \label{spec-real}
    \end{figure}  

The spectrum of Hamiltonian \p{Ham-real} depicted in Fig. \ref{spec-real} consists of 4 towers. The left tower involves the states 
$\Phi_-^n  =  |n\rangle (1- \psi_1\psi_2)$, where $|n\rangle$ are the ordinary bosonic oscillator states. The right tower 
 involves the states 
$\Phi_+^n  =  |n\rangle (1+ \psi_1\psi_2)$ and two central fermion towers involve the states $\Phi_\alpha^n = 
|n\rangle \psi_\alpha$. In contrast to the ordinary supersymmetic system where the ground state has a zero or maybe positive energy (if supersymmetry is spontaneously broken),  the ground state $\Phi_-^0$  of \p{Ham-real} annihilated by the action of $Q_\alpha$ and $\bar Q^\alpha$
has the energy $E = -1/2$.  The violation of the familiar supersymmetry pattern is, of course, related to the presence of the central charges in $\{ Q_\alpha, \bar Q^\beta\}$. If one wishes, one can add a constant and bring the ground state energy to zero,  but there is no impelling reason to do so.

At the first excited level, we have {\it three} states of energy $E = 1/2$; the boson and the fermion states are not paired.
This happens because  $Q_\alpha|\Phi^0_\beta\rangle = 0$ and the fourth component of the multiplet is missing. For the energies $E=3/2, 5/2, \ldots$, we have $Q_1 Q_2 |\Phi^{n\geq 2}_-\rangle \, \neq 0$, giving rise to habitual degenerate quartets. 

The index \p{Witind} has two contributions: from the ground state and from the triplet of first excitations. It reads
 \be
\lb{IW-real}
I \ =\ e^{\beta/2} - e^{-\beta/2} \ =\ 2\sinh (\beta/2).
 \ee
An alternative interpretation of the result \p{IW-real} is the following. Represent the index as 
\be
\lb{ind-fugac}
I \ =\ \left \langle \left \langle e^{-\beta H_1} e^{\beta(Z_1^{\,1} - Y)/2} \right \rangle \right \rangle,
 \ee
where $H_1 = \{Q_1, \bar Q^1\}/2$.
From the viewpoint of ${\cal A}_1$, we are calculating the generalized Witten index including the fugacities
$e^{\beta Z_1^{\,1}/2}$ and $e^{-\beta Y/2}$ associated with the conserved charges $Z_1^{\,1}$ and $Y$. 
Only the vacuum zero energy states of $H_1$ --- the state $(1-\psi_1\psi_2) |0 \rangle$ and the state $\psi_2 |0 \rangle$
contribute in \p{ind-fugac}. The different contributions of the two states is now attributed not to their different energies, but to different fugacities.

\vspace{1mm}

The system \p{Ham-real} is the simplest weak supersymmetric system, but it can be deformed to include nontrivial interactions, while keeping the algebra \p{alg-real} intact. The deformed quantum supercharges and the Hamiltonian read \cite{weak} \footnote{This system and also the weak supersymmetric system with complex field to be discussed later  admit a  superfield description worked out in \cite{IvSid}. }
  \be
\lb{Q-deform}
\sqrt{2}\, Q_\alpha &=& [p - iV(x)]( \psi_\alpha - \bar \psi_\alpha) + iB(x)\left(\bar \psi^\beta \psi_\alpha \bar \psi_\beta -  \psi_\beta \bar \psi_\alpha \psi^\beta   \right), \nn
\sqrt{2} \, \bar Q^{\alpha} &=& [p + iV(x)](\psi^\alpha + \bar \psi^\alpha) - iB(x) \left( \bar \psi^\beta  \psi^\alpha \bar \psi_\beta + \psi_\beta \bar \psi^\alpha \psi^\beta \right),
\ee
\be
\lb{Hreal-deform}
H \ =\ \frac {p^2}2 + \frac {V(x)^2}2 + \frac {V'(x)} 2  (\psi_\alpha \psi^\alpha + \bar \psi^\alpha \bar \psi_\alpha) + \frac {B'(x)}2 (\psi_\alpha \psi^\alpha \bar \psi^\beta \bar \psi_\beta - 2 \psi_\alpha \bar \psi^\alpha + 1) + \frac {B^2}2\,,
 \ee
where $V(x)$ is an arbitrary function and 
$$
B(x) \ =\ \frac {V'(x) - 1}{2V(x)}.
$$
The central charges $Z_\alpha{}^\beta$ and $Y$  are {\it not} deformed, they have the same form \p{ZY} as before.

As was explicitly shown in Ref. \cite{weak}, the ground state and the triplet of first excitations are not shifted under the deformation. In fact, the system \p{Hreal-deform} is related to certain {\it quasi-exactly solbavle} models, where the energies of the ground state and the first excitation are  fixed \cite{quasi}.
 As a result, the index is still given by the function \p{IW-real}.

The invariance of the index also follows, of course, from the general theorem proven in the previous section.

\section{Complex model}. 
\setcounter{equation}0

One of the ways to put the theory of  free massless chiral superfield $\Phi = \phi(x^\mu_L) + \sqrt{2} \,\theta_\alpha \psi^\alpha (x^\mu_L)$ on $S^3$ of radius $\rho$ while keeping as much of the original supersymmetries as possible is to write the Hamiltonian

\be
\lb{Ham-complex-gen}
H \ =\  \frac 1{\rho^3} \bar \pi  \pi + \rho \,  {\bar \phi  \phi}  + \frac {\rho^2} 2 (\psi_\alpha \bar \psi^{\alpha} - \bar \psi^{\alpha} \psi_\alpha) \ + \cdots\ ,
 \ee
where  $\phi$ etc. are the {\it constant} field modes, the dots standing for the contribution of all other modes, which we will disregard. The canonical (anti)commutators are 
$$[\pi, \phi] = [\bar \pi, \bar \phi] \ =\  -i, \qquad \{\psi_\alpha,  \bar \psi^\beta\} \ = \ 
\frac 1{\rho^3} \delta_\alpha^{\ \beta}.$$
 The Hamiltonian \p{Ham-complex-gen} commutes with the supercharges 
 \be
Q_\alpha \ =\ \sqrt{2}\, \psi_\alpha (\pi - i \rho^2 \bar \phi), \qquad \bar Q^\alpha \ =\ \sqrt{2} \, \bar \psi^\alpha  (\bar \pi + i \rho^2 \phi) .
 \ee
Now, $\{Q_\alpha, Q_\beta\} = 0$, but the  anticommutator $\{Q_\alpha, \bar Q^\beta\}$ involves the familiar central charges:
\be
\lb{alg-compl-gen}
\{Q_\alpha, \bar Q^\beta\} \ =\ 2\left(H+ \frac L\rho \right) \delta_\alpha{}^\beta + \frac 2\rho Z_\alpha{}^\beta,
 \ee
where 
 \be
\lb{LZ}
L \ =\ i(\phi \pi - \bar \phi \bar \pi)
 \ee
 has the meaning of the angular momentum in the $\phi$ plane and $ Z_\alpha{}^\beta$ was defined in \p{ZY}. The commutator  $[Z_\alpha{}^\beta, Z_\gamma{}^\delta]$ was written in \p{alg-real} and 
the other nontrivial commutators are 
\be
&&[Q_\alpha, L]\  =\  Q_\alpha, \qquad  [\bar Q^\alpha, L] \ =\  -\bar Q^\alpha, \nn
&& [Q_\alpha, Z_\beta{}^\gamma] \ =\  
\varepsilon_{\alpha\beta} Q^\gamma  - \delta_\alpha{}^\gamma Q_\beta, \qquad [\bar Q^\alpha, Z_{\beta\gamma}] \ =\  
\delta_\beta{}^\alpha  \bar Q^\gamma - \varepsilon^{\alpha\gamma}  \bar Q_\beta\,.
 \ee
This algebra is isomorphic to \p{alg-Sen} and \p{alg-real}.

In the following, we set  for simplicity $\rho=1$, which gives
\be
\lb{Ham-complex}
H \ =\   \bar \pi  \pi +  {\bar \phi  \phi}  + \frac {1} 2 (\psi_\alpha \bar \psi^{\alpha} - \bar \psi^{\alpha} \psi_\alpha),
\qquad Q_\alpha \ =\  \sqrt{2}\, \psi_\alpha  (\pi - i  \bar \phi)
 \ee
and 
\be
\lb{alg-compl}
\{Q_\alpha, \bar Q^\beta\} \ =\ 2\left(H+  L \right) \delta_\alpha{}^\beta +  2 Z_\alpha{}^\beta,
 \ee

The spectrum of $H$ is the spectrum of the 2-dimensional oscillator shifted by  $F-1$, where $F$ is  the fermion charge --- the eigenvalue of the operator $\psi_\alpha \bar \psi^\alpha$. We have
 \be
\lb{spec-H0}
E^F_{nl} \ =\ 2n + |l| +F,
 \ee
where $n$ is the number of the radial excitation.
Note now that the supercharges also commute with the operator $K = L + F$. We can thus consider a modified Hamiltonian 
\be
\lb{Ham-lam}
H_\lambda = H +\lambda K.
\ee
The bosonic part of this Hamiltonian describes a 2-dimensional oscillator supplemented by a magnetic field. 
The anticommutator $\{Q_\alpha, \bar Q^\beta\}$ is also modified:
\be
\lb{anti-mod}
\{Q_\alpha, \bar Q^\beta\} \ =\ 2\delta_\alpha{}^\beta (H_\lambda - \lambda K + L) + 2Z_\alpha{}^\beta.
 \ee
That does not mean, of course, an essential modification of the algebra, it is still basically $su(2|1)$. But the anticommutator $\{Q_\alpha, \bar Q^\beta\}$, being expressed in terms of $H_\lambda$, does not have the same functional form as \p{alg-compl} and includes the extra operator $F$.  The theorem of Sect. 2 does not apply in this case,  the index \p{Witind} of $H_\lambda$  need not be the same as the index of $H$, and it is not.

The spectrum of $H_\lambda$ reads 
\be
\lb{spec-Hlam}
E^F_{nl} \ =\ 2n + |l| +F +\lambda(l+F).
 \ee 

 If we want the spectrum to be bounded from below, the  parameter $\lambda$ should not exceed unity. 

Well, in principle, there is nothing wrong with the {\it free} Hamiltonian with $\lambda >1$. It would represent an  example of the system with {\it benign} ghosts \cite{ghosts}. But we are interested in this paper with nonlinear  deformations keeping the weak supersymmetry algebra intact. If $\lambda > 1$, such a deformation would make the ghosts {\it malignant} and unitarity of the theory would be violated.

As was mentioned, the Hamiltonians \p{Ham-complex} and \p{Ham-lam} are related to the theory of free massless chiral multiplet placed on $S^3 \times \mathbb{R}$. Suppose that we want to put there a theory involving a superpotential $W(\Phi)$ in such a way that the weak SUSY algebra is kept intact. One can then derive that \cite{Romel2,IvSid}

{\it (i)} It is only possible for a superpotential $W(\Phi) \propto \Phi^n$.

{\it (ii)} The value of $\lambda$ must take a particular value 
\be
\lb{lam-n}
 \lambda = 1 - 2/n.
 \ee

As far as algebraic properties are concerned, $n$ needs not be integer. But we will be interested only in the models with integer $n$, where the potential does not involve ugly branchings at the origin. In the first place, in the renormalizable Wess-Zumino model with $n=3$.  

For a generic $\lambda$, the  spectrum \p{spec-Hlam} has a complicated structure. It involves an infinite number of ``castles", each consisting of four towers, as in Fig.\ref{spec-real}. The first such castle grows from the ``cellar" $\Psi^0_{00}$
with zero energy. At its ground floor, we find a bosonic state $\Psi^0_{01}$ with energy $E = 1 + \lambda$ and a couple of fermionic states $\Psi^1_{00}$ with the same energy. Higher floors of this castle represent degenerate quartets. At the first floor with energy $E = 2(1 + \lambda)$, we find the bosonic states $\Psi^0_{02}$ and   $\Psi^2_{20}$ and a couple of fermionic states  $\Psi^1_{01}$.

The cellar of the second castle is the state  $\Psi^0_{0,-1}$ with energy $E = 1-\lambda$. The ground floor has the energy $E = 2$ and includes the bosonic state  $\Psi^0_{02}$ and two fermionic states  $\Psi^1_{0,-1}$. At the first floor with  energy $E = 3 + \lambda$, we have the state  $\Psi^0_{11}$,  two  states $\Psi^1_{10}$ and  the state $\Psi^2_{0,-1}$. And so on.  
The higher castles grow from the states  $\Psi^0_{0,-m}$ with energy $E = m(1-\lambda)$. The gap between the floors in all the castles is $\Delta E = 1 + \lambda$.

The index \p{Witind} acquires the contributions from the cellars and ground floors in each castle. The calculation gives

\be
\lb{ind-lam}
I(\lambda) \ =\ \left[ 1 - e^{-\beta(1 + \lambda)} \right] \left[ 1 + e^{-\beta(1 - \lambda)} + e^{-2\beta(1 - \lambda)} + \cdots \right] \ =\ \frac {1 - e^{-\beta(1 + \lambda)}}{1 - e^{-\beta(1 - \lambda)}},
\ee
in agreement with Eq.(31) in the last R\"omelsberger's paper \cite{Romel2}.\footnote{Note the difference in notations. R\"omelsberger's $\Xi$ is the same as our $H$.} 

However, the spectrum simplifies a lot for the special values of $\lambda$ in \p{lam-n}. Take $n = 3$ giving $\lambda = 1/3$. Then the triplet on the  ground floor of the $m$-th castle becomes degenerate with the cellar of the castle $m+2$.
As a result, the spectrum now involves two bosonic singlets with energies $E=0$ and $E = 2/3$, while all the other states belong to the supersymmetric quartets. The index is
\be
\lb{ind-n3}
I(3) \ =\ 1 + e^{-2\beta/3}.
 \ee

The extra degeneracies appear also for higher $n$ when  the triplet on the  ground floor of the $m$-th castle becomes degenerate with the cellar of the castle $m+n-1$. We are left with $n-1$ singlet bosonic states, the other states are in the quartets, and the index is 
\be 
\lb{ind-n}
I(n) \ =\ \sum_{m=0}^{n-2} e^{-2\beta m/n}.
 \ee
The results \p{ind-n3} and \p{ind-n} [which also follow from \p{ind-lam}] are quite natural. If the Wess-Zumino model with superpotential $W(\Phi) \propto \Phi^n$ is placed on $T^3 \times \mathbb{R}$, the Witten index is known to be equal to $n-1$. But on $S^3$ the degenerate vacuum states are equidistantly split with the gap $\Delta E = 2/n$ [actually, 
$\Delta E = 2/(n \rho)$], and the index becomes a function of the ratio $\beta/\rho$.

We go back to the simplest case $n=3$. The deformed supercharges and the Hamiltonian read
\be
\lb{Qdeform-compl}
&& Q_\alpha \ =\ \sqrt{2} \left[(\pi - i\bar \phi) \psi_\alpha + i\gamma \bar \psi_\alpha \bar \phi^2 \right], \nn
&& \bar Q^\alpha \ =\ \sqrt{2} \left[ (\bar\pi + i \phi) \bar \psi^\beta + i\gamma  \psi^\alpha \phi^2 \right],
  \ee
\be
\lb{Hdeform-n3}
H \ =\ \bar \pi \pi + \bar \phi \phi + \psi_\alpha \bar \psi^\alpha -1 \ + \ \frac 13 [i(\phi \pi - \bar \phi \bar \pi) + \psi_\alpha \bar \psi^\alpha] \nn
+ 2\gamma (\psi_1 \psi_2 \phi + \bar \psi^2 \bar \psi^1 \bar \phi) + \gamma^2 (\bar \phi \phi)^2,
 \ee
where $\gamma$ is the deformation parameter.

They satisfy the same algebra   as in the undeformed case:
\be
\lb{alg-gam}
\{Q_\alpha, \bar Q^\beta\} \ =\ 2\delta_\alpha{}^\beta \left(H  + \frac {2L}3  - \frac F3\right) + 2Z_\alpha{}^\beta.
 \ee
By the theorem proven in Sect. 2, the index stays invariant under such deformation.

\vspace{1mm}

This implies the following properties of the deformed spectrum:

\begin{itemize}
\item The energies of the singlet bosonic states are still $E_0 = 0$ and $E_1 = 2/3$.

\item The  quartets stay degenerate, but their position may be shifted.
 \end{itemize}

These conclusions were confirmed by an explicit perturbative calculation to the order $\gamma^2$. 
The singlets are not shifted and the quartets are. For example a ``new" quartet (that appears in the spectrum for a special value $\lambda = 1/3$ and has the energy $E = 2m/3$ in the undeformed case, involving the states $\Psi^0_{0,-2}$, $2\Psi^1_{00}$ and $\Psi^0_{01}$ for $m=2$ and the states $\Psi^0_{0,-m}$, $2\Psi^1_{0,2-m}$ and $\Psi^0_{1,3-m}$ for $m\geq 3$) is shifted up by
 \be
\delta E_\gamma \ =\ \frac {m(m-1)}4 \, \gamma^2.
 \ee
 
Note that the undeformed model also has  extra degeneracies which {\it are} lifted under deformation. For example, the ``new" quartet with the states $\{\Psi^0_{0,-4}, 2\Psi^1_{0,-2}, \Psi^0_{1,-1}\}$ has the same energy $E = 8/3$ as the ``old" quartet  (the first floor of the first castle) with the states   $\{\Psi^0_{02}, 2\Psi^1_{01}, \Psi^2_{00}\}$ before the deformation. After the deformation, the new quartet is shifted up by $3\gamma^2$ and the old one  only by $3\gamma^2/2$.

    \section*{Akcnowledgements}
    
    I am indebted to  to Sergei Fedoruk and Stepan Sidorov for useful comments and to Bruno Le Floch, Vasily Pestun and  Vyacheslav Spiridonov  for illuminating discussions and valuable remarks.

  \section*{Appendix}
For a dedicated reader who might wish to check our calculations for the complex model, we present here the explicit expressions for first few normalized eigenstates of the Hamiltonian $H_{1/3}$.
\vspace{1mm}

$\bullet E=0$:
$$ \Psi^0_{00} \ =\ \frac 1{\sqrt{\pi}} \, e^{-\phi \bar \phi},$$

$\bullet E=2/3$:
$$ \Psi^0_{0,-1} \ =\ \sqrt{\frac 2 \pi} \, \bar \phi \, e^{-\phi \bar \phi},$$
 
$\bullet E=4/3$:
$$ \Psi^0_{01} \ =\ \sqrt{\frac 2 \pi} \,  \phi \,  e^{-\phi \bar \phi}, \qquad \Psi^0_{0,-2} = \sqrt{\frac 2 \pi} \,  \bar \phi^2\, e^{-\phi \bar \phi}, \qquad \Psi^1_{00,\alpha} = \frac {\psi_\alpha}{\sqrt{\pi}} \,e^{-\phi \bar \phi},$$

 $\bullet E=2$:
$$ \Psi^0_{10} \ =\ \frac 1{\sqrt{\pi}}(1 - 2\phi \bar \phi)\, e^{-\phi \bar \phi}, \qquad \Psi^0_{0,-3}\ =\ \sqrt{\frac 4
{3 \pi}}\,\bar \phi^3 \, e^{-\phi \bar \phi}, \qquad \Psi^1_{0,-1,\alpha} = \ \psi_\alpha \sqrt{\frac 2{\pi}} \,  \bar \phi \, e^{-\phi \bar \phi}, $$

etc.


\begin{thebibliography}{96}

\bibitem{Wit8182} E. Witten, {\it Dynamical breaking of supersymmetry}, Nucl. Phys. B {\bf 188} (1981) 513;

 {\it Constraints on Supersymmetry Breaking}, ibid B {\bf 202} (1982) 253.

\bibitem{Hooft} G. `t Hooft, {\it  	
Computation of the quantum effects due to a four-dimensional pseudoparticle}, Phys. Rev. {\bf D14} (1976) 3432.

\bibitem{Sen} D. Sen, {\it Supersymmetry in the space-time $\mathbb{R} \times S^3$}, Nucl. Phys. B {\bf 284} (1987) 201.

\bibitem{weak} A.V. Smilga, {\it Weak supersymmetry}, Phys. Lett. B {\bf 585} (2004) 173, arXiv:hep-th/0311023.

\bibitem{Spir}  V. Spiridonov and G. Vartanov, {\it Elliptic hypergeometry of supersymmetric dualities}, Commun. Math. Phys. {\bf 304} (2011) 797,  arXiv:0910.5944 [hep-th].

\bibitem{Romel1}  C. R\"omelsberger, {\it Counting chiral primaries in ${\cal N} =1  \ d=4$ superconformal field theories},  Nucl. Phys. B {\bf 747} (2006) 329, arXiv:hep-th/0510060;

J. Kinney, J. Maldacena, S. Minwalla and S. Raju, {\it An index for 4-dimensional super conformal theories}, Commun. Math. Phys. {\bf 275} (2007)209, arXiv:hep-th/05110251.



\bibitem{Rastelli} L. Rastelli and S. Razamat,   {\it The supersymmetric index in four dimensions}, J. Phys. A {\bf 50} (2017) 443013, arXiv:1608.02965 [hep-th].



\bibitem{Pestun} V. Pestun, {\it Review of localization in geometry}, J. Phys. A {\bf 50} (2017) 443002, arXiv:1608.02954 [hep-th].

\bibitem{Cartan} H. Cartan, {\it La transgression dans un groupe de Lie et dans un fibr\'e principal}, Colloque de topologie (espaces fibr\'es), Centre belge de recherches math\'ematiques, Masson et Cie, Paris, 1951, p.57.

\bibitem{IvSid} E. Ivanov and S. Sidorov, {\it Deformed supersymmetric mechanics}, Class. Quant. Grav. {\bf 31} (2014) 075013, arXiv:1307.7690 [hep-th]. 

\bibitem{quasi} A. Turbiner, {\it Quasiexactly solvable problems and SL(2) group}, Commun. Math. Phys. {\bf 118} (1988) 467.

\bibitem{ghosts} A.V. Smilga,  {\it Classical and quantum dynamics of higher-derivative systems},   Int. J. Mod.  Phys. A {\
bf 32} (2017) 1730025, arXiv:1710.11538 [hep-th].

\bibitem{Romel2} C. R\"omelsberger,  {\it Calculating the
superconformal index and Seiberg duality}, arXiv:0707.3702 [hep-th].


\end{thebibliography}
\end{document}